\newtheorem{theorem}{Theorem}
\newtheorem{remark}{Remark}
\newcommand{\tp}{\mathrm{T}}
\newenvironment{changemargin}[2]{%
\begin{list}{}{%
\setlength{\topsep}{0pt}%
\setlength{\leftmargin}{#1}%
\setlength{\rightmargin}{#2}%
\setlength{\listparindent}{\parindent}%
\setlength{\itemindent}{\parindent}%
\setlength{\parsep}{\parskip}%
}%
\item[]}{\end{list}}
\title{\LARGE \bf
Constrained Inverse Optimal Control with Application to a \\ 
Human Manipulation Task
}
\author{Marcel Menner, Peter Worsnop, and Melanie N. Zeilinger
        \thanks{M. Menner, P. Worsnop, and M.N. Zeilinger 
        are with the Institute for Dynamic Systems and Control,
       ETH Zurich, 8092 Zurich, Switzerland   \hspace{3cm}
        {\tt\small \{mmenner,pworsnop,mzeilinger\}@ethz.ch}}%
\thanks{This work was supported by the Swiss National Science Foundation
under grant no. PP00P2${\_}$157601 / 1.}
}
\begin{document}

\titlepage
\vspace*{3cm}
\begin{center}
\LARGE \bf
Constrained Inverse Optimal Control with Application to a \\ 
Human Manipulation Task
\end{center}
\vspace{0.0cm}
\begin{center}
Marcel Menner, Peter Worsnop, and Melanie N. Zeilinger
\end{center}

\vspace{3cm}
\noindent This work has been accepted for publication in the IEEE Transactions on Control Systems Technology.

\noindent Digital Object Identifier 10.1109/TCST.2019.2955663.

\vspace{1cm}
\noindent Please cite this work as
\\

\noindent  M. Menner, P. Worsnop, and M.N. Zeilinger, "Constrained Inverse Optimal Control with Application to a Human Manipulation Task," \textit{IEEE Transactions on Control Systems Technology}, 2019, doi:10.1109/TCST.2019.2955663.
\vspace{10cm}

\begin{changemargin}{0cm}{0cm}
\noindent 
\textcopyright\hspace{-0.1cm} 2019 IEEE.  
 Personal use of this material is permitted.  Permission from IEEE must be obtained for all other uses, in any current or future media, including reprinting/republishing this material for advertising or promotional purposes, creating new collective works, for resale or redistribution to servers or lists, or reuse of any copyrighted component of this work in other works.
\end{changemargin}

\twocolumn
\newpage

\maketitle
\thispagestyle{empty}
\pagestyle{empty}

\begin{abstract}
This paper presents an inverse optimal control methodology and its application to training a predictive model of human motor control from a manipulation task. 
It introduces a convex formulation for  learning both objective function and constraints of an infinite-horizon constrained optimal control problem with nonlinear system dynamics.
The inverse approach utilizes Bellman's principle of optimality to formulate the infinite-horizon optimal control problem as a shortest path problem and Lagrange multipliers to identify constraints. 
We highlight the key benefit of using the shortest path formulation, i.e., the possibility of training the predictive model with short and selected trajectory segments.
The method is applied to training a predictive model of movements of a human subject from a manipulation task.
The study indicates that individual human movements can be predicted with low error using an infinite-horizon optimal control problem with constraints on shoulder movement. 
\end{abstract}
\begin{IEEEkeywords}
Imitation learning,
learning for dynamics and control,
learning from demonstrations,
manipulation tasks.
\end{IEEEkeywords}
\section{Introduction}
\IEEEPARstart{A}{s robotic}
systems are applied to increasingly unstructured and unpredictable environments, the ability to identify and adapt to their environment is becoming of critical importance. 
The collaboration with humans represents a particular challenge, as the interaction varies between individuals.
The manipulation of an articulated object by a human in collaboration with a robot is one example, where the robot performance can be improved by learning a model to describe and predict the human motor control behavior \cite{Lee2012}.

The literature on human control behavior widely agrees on the fact that human motor performance is achieved through the reactive and predictive component (see the review in \cite{Wolpert2011}).
The reactive component is triggered by sensory inputs and updates an ongoing motor command; it can, therefore, be interpreted as the feedback control action. 
The predictive component capitalizes on the ability to anticipate motor events based on memory in order to accomplish a given task under foreseeable conditions, which can be interpreted as feedforward action \cite{Kawato1999}.
The existence of these two components has been highlighted in studies of various motor control tasks, including grasping and manipulation \cite{Johansson1988,Johansson1992,Fu2010}.

In this work, we present a shortest path inverse optimal control method, which is applied to train a predictive model of human motor control. 
The inverse optimal control method is thereby used to learn the parameters of an optimal control problem from demonstrated state and input trajectories.
In particular, it learns both the objective function and constraints of an underlying infinite-horizon optimal control problem from observed trajectory segments of finite length using optimality conditions of a corresponding shortest path problem and a candidate constraint set.
The optimality conditions are derived based on Bellman's principle of optimality \cite{Bellman1957} and the \underline Karush-\underline{K}uhn-\underline{T}ucker (KKT) optimality conditions \cite{Kuhn1951}. 
The proposed method is convex for objective functions that are linear in their parameters and for general nonlinear systems, where relevant constraints are identified from the candidate constraint set using Lagrange multipliers.
The method is utilized to train a predictive model of movements of three human subjects from a human manipulation task. 

We set up a human manipulation experiment, where three human subjects manipulated one end of a passive kinematic object whose position was changed consecutively by a robot.
In this context, the goal of the inverse learning method is to train a predictive model of human movements. 
The underlying hypothesis is that the demonstrations of the human manipulation task are optimal with respect to an infinite-horizon constrained optimal control problem.
The experimental study highlights the potential of the proposed learning approach by providing good predictive performance for individual human movements.
In particular, the proposed shortest path formulation is shown to be beneficial for suboptimal execution, i.e., disregard the reactive human motor control component in the application considered in this paper.

Related inverse optimal control approaches are presented in \cite{Kalman1964, Priess2015,Menner2018,Mombaur2010,Puydupin2012,Englert2017,Majumdar2017}.
The approaches in \cite{Kalman1964,Priess2015,Menner2018} can be interpreted as an inverse method of an infinite-horizon optimal control problem, but they are restricted to unconstrained, linear systems and quadratic objective functions.
In \cite{Mombaur2010}, a bilevel approach to solve an inverse unconstrained optimal control problem is presented. 
The techniques closest to our method are \cite{Puydupin2012,Englert2017,Majumdar2017}, where the KKT conditions are similarly used for learning the stage cost but the constraints are assumed to be known. 
The two main distinctions of our approach with respect to \cite{Puydupin2012,Englert2017,Majumdar2017} are the consideration of an optimal control problem with an infinite horizon and the simultaneous identification of constraints from a candidate constraint set that is constructed from data with a convex optimization problem.
By using a shortest path formulation, the required trajectory segment for learning the parameters of the underlying optimal control problem can be shorter, e.g., compared to \cite{Englert2017}, and the learned parameters are invariant with respect to the chosen trajectory segment.
As for the application, the incorporation of constraints results in better predictions of human movement, whereas the consideration of a shortest path formulation allows for isolating trajectory segments where the predictive component is dominant, i.e., where the hypothesis of optimal demonstrations with respect to an optimal controller is valid.

\section{Shortest Path Inverse Optimal Control}
\label{sec:opt}

This section presents an \underline inverse \underline optimal \underline control (IOC) approach based on a shortest path formulation to learn an objective function and constraints from observations.
The observations are represented as trajectories of state measurements  $x(k)\in \mathbb{R}^n$ and inputs $u(k)\in \mathbb{R}^m$ at time-step $k$, where
\begin{equation}
\label{eq:sys}
x(k+1)=f(x(k),u(k))
\end{equation}
with the potentially nonlinear function $f(\cdot)$ modeling the evolution of the state. 
For the derivation of the inverse method in this section, we assume that $f(\cdot)$ is given. 
Section~\ref{sec:robot} discusses how to identify $f(\cdot)$ for the considered application. 

Observed trajectories are assumed to be optimal with respect to an infinite-horizon constrained optimal control problem, i.e., $x(k+i)=x_i^\star$ and $u(k+i)=u_i^\star$ $\forall$ $i\geq 0$ with 
\begin{subequations}
\label{eq:dp}
\begin{align}
\left\{ x_i^\star,u_i^\star \right\}_{i=0}^\infty
=\  \arg \min_{x_i,u_i}\ &\sum_{i=0}^{\infty} l(x_i,u_i;L) 
\\
 {\rm s.t.}\ &
\label{eq:opt_sysdyn}
x_{i+1} = f(x_i,u_i) & \forall\ i\geq 0
\\
\label{eq:opt_const}
& C(x_i,u_i) \leq 0 & \forall\ i\geq 0
\\
\label{eq:opt_init}
& x_0=x(k)
\end{align}
\end{subequations}
with stage cost $l(x_i,u_i;L)$ defined as a parametric function with parameters $L$, constraint set $C(x_i,u_i) \leq 0$, and initial state $x(k)$.
The notation $\left\{ \cdot \right\}_{i=0}^\infty$ is used to indicate indices from $i=0$ to $\infty$.
The goal in this work is to train a predictive model by learning both $l(x_i,u_i;L)$ and $C(x_i,u_i)$ from state and input measurements, which is referred to as the inverse problem to \eqref{eq:dp} in the following.

\subsubsection*{Problem Definition}
The first difficulty in the inverse problem of \eqref{eq:dp} is that measurements $x(k),u(k)$ are not available for $k\rightarrow \infty$ but only in some finite segment. 
We address this using a shortest path formulation (see Section~\ref{sec:DPtrajseg}).
For cases, where the constraint set $C(\cdot,\cdot)$ is unknown, we propose the construction of a candidate constraint set.
The main step of the proposed approach is the derivation of optimality conditions of the shortest path formulation using the candidate constraint set (see Section~\ref{sec:KKTtrajseg}).
The optimality conditions are then used to simultaneously identify constraints from the candidate set and learn the stage cost parameters. 

\subsection{Formulation of infinite-horizon as shortest path problem}
\label{sec:DPtrajseg}
We formulate the infinite-horizon problem as a shortest path problem of finite length $e$ and show that the minimizers of both the infinite-horizon problem and the shortest path problem are identical along the path, i.e., from time $k$ to $k+e$.
Let ${X}^m := 
[\ x(k)^\tp\
x(k+1)^\tp\
\hdots\
x({k+e})^\tp\ ]^\tp \in \mathbb{R}^{n(e+1)}$ and 
${U}^m := 
[\ u(k)^\tp\
u(k+1)^\tp\
\hdots\
u({k+e-1})^\tp\ ]^\tp \in \mathbb{R}^{me}$ be the collection of state and input measurements, respectively, over the time interval $k$ through $k+e$.
If $X^m$, $U^m$ describe the shortest path, then they (at least locally) minimize
\begin{equation}
\label{eq:segment}
\begin{aligned} 
\left\{X^m,U^m\right\} =  \arg  &
 \min_{
x_i,
u_i}\ 
\sum_{i=0}^{e-1} l(x_i,u_i;L) 
\\
&
{\rm s.t.}
\begin{array}[t]{lll}
 x_{i+1} = f(x_i,u_i)  
 \\
C(x_i,u_i) \leq 0\quad  i=0,...,e-1
 \\
x_0 = x(k)
\\
x_e = x(k+e).
\end{array}
\end{aligned}
\end{equation}
Using Bellman's principle of optimality \cite{Bellman1957}, we can show that $X^m$, $U^m$ then also correspond to minimizers of \eqref{eq:dp} for $i=k,...\ k+e$, which is formally stated in the following theorem.

\begin{theorem}
Consider a trajectory segment of measurements ${X^m}$, ${U^m}$ from a dynamical system \eqref{eq:sys}. 
If the observed inputs $U^m$ are the result of the optimal control problem in \eqref{eq:dp} for times $k,...,k+e-1$, then $X^m$, $U^m$ also (at least locally) minimize the optimization problem in \eqref{eq:segment}.
\end{theorem}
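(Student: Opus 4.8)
The plan is to prove the statement by contradiction, invoking Bellman's principle of optimality through a splicing (``cut-and-paste'') argument. First I would denote the infinite-horizon optimizer by $\{x_i^\star,u_i^\star\}_{i=0}^\infty$, so that by hypothesis $x_i^\star = x(k+i)$ and $u_i^\star = u(k+i)$ along the observed segment, and in particular $x_0^\star = x(k)$ and $x_e^\star = x(k+e)$. I would first observe that the restriction of this optimizer to indices $i=0,\dots,e$ is itself feasible for \eqref{eq:segment}: it satisfies the dynamics \eqref{eq:opt_sysdyn}, the stage constraints \eqref{eq:opt_const} for $i=0,\dots,e-1$, and both boundary conditions $x_0=x(k)$ and $x_e=x(k+e)$.

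Next I would suppose, for contradiction, that this restricted segment is not even a local minimizer of \eqref{eq:segment}. Then in every neighborhood of the segment there exists a feasible trajectory $\{\tilde{x}_i,\tilde{u}_i\}_{i=0}^e$ for \eqref{eq:segment} with strictly smaller segment cost, i.e.\ $\sum_{i=0}^{e-1} l(\tilde{x}_i,\tilde{u}_i;L) < \sum_{i=0}^{e-1} l(x_i^\star,u_i^\star;L)$. The key construction is to splice this candidate segment onto the optimal tail: I define an infinite-horizon trajectory that uses $(\tilde{x}_i,\tilde{u}_i)$ for $i=0,\dots,e-1$ and then switches to $(x_i^\star,u_i^\star)$ for $i\geq e$.

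I would then verify that this spliced trajectory is admissible for \eqref{eq:dp}, which is where the principle of optimality enters. Feasibility of the dynamics at the junction follows because the boundary condition $x_e=x(k+e)=x_e^\star$ enforced in \eqref{eq:segment} guarantees that the state entering the tail coincides with $x_e^\star$; hence the tail dynamics continue consistently and no constraint is violated at the switch. The initial condition and all stage constraints are inherited from the feasibility of the two pieces. Comparing total costs, the spliced trajectory achieves $\sum_{i=0}^{e-1} l(\tilde{x}_i,\tilde{u}_i;L) + \sum_{i\geq e} l(x_i^\star,u_i^\star;L)$, which is strictly below the infinite-horizon optimal value, since the tail contribution is identical and the head contribution strictly decreased.

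The final step, and the one requiring the most care, is the \emph{local} qualifier. Because the spliced trajectory differs from $\{x_i^\star,u_i^\star\}$ only in its first $e$ entries, and the candidate segment can be taken arbitrarily close to the original segment, the spliced infinite-horizon trajectory can be made arbitrarily close to the infinite-horizon optimizer in the natural (e.g.\ product) topology on trajectories. It therefore lies in any prescribed neighborhood of $\{x_i^\star,u_i^\star\}$, contradicting the assumed local optimality of the latter for \eqref{eq:dp}. I expect the main obstacle to be making this closeness argument rigorous, namely confirming that a neighborhood basis on the finite segment lifts, under splicing, to a neighborhood basis on the infinite-horizon trajectory, rather than the cost comparison itself, which is immediate once the common tail cancels.
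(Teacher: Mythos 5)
Your proof is correct, but it takes a genuinely different route from the paper's. The paper argues directly: it splits the infinite sum in \eqref{eq:dp} into the first $e$ stages plus the tail, then invokes Bellman's principle of optimality as a black box to replace the tail by the optimal cost-to-go $J^\star(x_e^\star)$, obtaining problem \eqref{eq:dp4}; since $J^\star(x_e^\star)$ is a constant under the end-point constraint $x_e = x_e^\star$, the minimizers of \eqref{eq:dp} and \eqref{eq:dp4} coincide on $i=0,\dots,e$, and substituting $x_e^\star = x(k+e)$ gives \eqref{eq:segment}. You instead prove the relevant instance of the principle from scratch, by contradiction via a splicing argument: a hypothetically better feasible segment for \eqref{eq:segment} is concatenated with the optimal tail, and the end-point constraint $x_e = x(k+e) = x_e^\star$ is exactly what makes the concatenation dynamically feasible for \eqref{eq:dp}, so the common tail cost cancels and the head improvement contradicts optimality. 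Your version is more elementary and self-contained, and slightly more general: it needs only \emph{local} optimality of the demonstration for \eqref{eq:dp}, and your topological lifting argument (exact agreement beyond index $e$ plus closeness on the first $e$ coordinates) is precisely the right way to transfer neighborhoods, whether one uses the product or the uniform topology on trajectories. The paper's version is shorter but leans on the cited principle and on the value function being well defined. One shared caveat: your cancellation of the common tail (and equally the paper's use of $J^\star(x_e^\star)$) implicitly requires the tail cost to be finite; if the infinite-horizon cost were $+\infty$, the strict cost comparison would be vacuous. This is an implicit assumption of the model \eqref{eq:dp} rather than a defect specific to your argument, so no repair is needed beyond stating it.
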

\begin{proof}
The optimization problem in \eqref{eq:dp} can be written as
\begin{equation}
\begin{aligned}
\label{eq:dp2}
J^\star(x(k))=\ \min_{x_i,u_i}\ 
&
\sum_{i=0}^{e-1} l(x_i,u_i;L)  + \sum_{i=e}^{\infty} l(x_i,u_i;L) 
\\
 {\rm s.t.}\ &
\eqref{eq:opt_sysdyn},\eqref{eq:opt_const},
\eqref{eq:opt_init}.
\end{aligned}
\end{equation}
If $x_e^\star$ is known, then, using Bellman's principle of optimality \cite{Bellman1957} with $x_e=x_e^\star$, \eqref{eq:dp2} can be formulated as
\begin{equation}
\begin{aligned}
\label{eq:dp4}
J^\star(x(k)) =\ 
\min_{x_i,u_i}\ 
& \sum_{i=0}^{e-1} l(x_i,u_i;L)  + J^\star(x_e^\star) 
\\
\quad {\rm s.t.}\ 
&
\begin{array}[t]{lll}
x_{i+1} = f(x_i,u_i) & i=0,...,e-1
\\
C(x_i,u_i) \leq 0 & i=0,...,e-1
\\
x_0=x(k)
\\
x_e=x_e^\star.
\end{array}
\end{aligned}
\end{equation}
Hence, the minimizers of \eqref{eq:dp} and \eqref{eq:dp4} are equal for all $i=0,...,e$.
The result follows with $x_e^\star=x(k+e)$.
\end{proof}

Note that problem \eqref{eq:segment} differs from a standard finite-horizon formulation as used in \cite{Englert2017} by the end-point constraint $x_e=x(k+e)$, which makes a key difference for learning the problem parameters, as will be illustrated in Section~\ref{sec:simulation}.

\begin{remark}
The shortest path formulation originates from the hypothesis that demonstrations are optimal with respect to the infinite-horizon problem in \eqref{eq:dp}. 
For a different model/ hypothesis, the formulation of the inverse problem can differ.
A particular advantage of the shortest path formulation is that any path along the measured trajectory can be used for learning.
This allows for selecting particular paths where the assumption of optimal execution/data is fulfilled "more closely," e.g., high signal-to-noise ratio or negligible reactive human motor control component in the application considered. 
\end{remark}

\subsection{Optimality conditions} 
\label{sec:KKTtrajseg}
In the following, we derive optimality conditions of the shortest path problem in \eqref{eq:segment} and show how they can be used for {learning} both parameters of the stage cost and constraints.
First, we express the optimization problem in \eqref{eq:segment} in terms of the inputs $u_i$ by recursively defining $x_i=F_i(U,x_0)$:
\begin{align}
\label{eq:FU}
F_i(U,x_0) :=
\begin{cases}
x_0 \quad &{\rm if}\quad i = 0
\\
 f(F_{i-1}(U,x_0),u_{i-1})
\ &{\rm else}
\end{cases}
\end{align}
with
${U} := 
\begin{bmatrix}
u_0^\tp
&
u_1^\tp
&
\hdots
&
u_{e-1}^\tp
\end{bmatrix}^\tp$.
Hence, the resulting optimization problem is given as
\begin{equation}
\label{eq:seg_final}
\begin{aligned}
 \min_{U}\ &
 \sum_{i=0}^{e-1} l(F_i(U,x(k)),u_i;L) 
\\
{\rm s.t.}\
&
\begin{array}[t]{lll}
C(F_i(U,x(k)),u_i) \leq 0\quad   i=0,..., e-1
\\
F_e(U,x(k)) = x(k+e),
\end{array}
\end{aligned}
\end{equation}
where we use $x_0 = x(k)$.
The Lagrangian $\mathcal{L}(U,\lambda,\nu,L)$ of the optimization problem in \eqref{eq:seg_final} is given by
\begin{align}
\begin{aligned}
\label{eq:lagrangian}
&\mathcal{L}( U,\lambda,\nu,L) =  \nu^\tp (F_e(U,x(k)) - x(k+e))
\\
& 
+ \sum_{i=0}^{e-1} l(F_i(U,x(k)),u_i;L) 
+ \lambda_i^\tp {C}(F_i(U,x(k)),u_i)
\end{aligned}
\end{align}
with the Lagrange multipliers $\lambda_i \geq 0$ and $\nu \in \mathbb{R}^n$ (see \cite{Boyd2004}), and $L$ denoting the parameters of the stage cost $l(x_i,u_i;L)$.
Using $\mathcal{L}(\cdot)$ in \eqref{eq:lagrangian}, the KKT optimality conditions for the trajectory segment are given by
\begin{subequations}
\label{eq:kkt}
\begin{align}
&\nabla_U 
\mathcal{L}(U,\lambda,\nu,L)
= 0
\\
&\lambda_i^{ \tp} {C}(F_i(U,x(k)),u_i) = 0 & \hspace{-.6cm} i=0,...,e-1
\\
&\lambda_i\geq 0 & \hspace{-.6cm}  i=0,...,e-1
\\
& 
\label{eq:primal}
C(F_i(U,x(k)),u_i)  \leq  0 & \hspace{-.6cm}  i=0,...,e-1
\\
& 
\label{eq:Fe}
F_e(U,x(k)) - x(k+e)=0.
\end{align}
\end{subequations}

\subsubsection{Construction of candidate constraint set}
\label{sec:cons}
Eq. \eqref{eq:primal} will hold for any observed trajectory with optimal execution (primal feasibility); however, the function $C$ might be unknown.
If $C$ is unknown, we propose to use \eqref{eq:primal} to construct candidate constraints $\bar C(x_i,u_i)$ as the convex hull of all observed data points of the form $P[x_i^\tp\ u_i^\tp]^\tp\leq p$.
A subset of the candidate constraints is then identified as constraints via the KKT conditions. 
A method for computing the convex hull, i.e., $P$ and $p$, is, e.g., presented in \cite{Barber1996}.

\subsubsection{Optimality conditions for learning}
The idea of the proposed approach is to solve \eqref{eq:kkt} for the parameters $L$ of the stage cost $l(x_i,u_i;L)$ as well as for $\lambda_i$ and $\nu$, given measurements ${X^m},\ {U^m}$ and the candidate constraints $\bar C(x_i,u_i)$, i.e.
\begin{subequations}
\label{eq:kktinfer}
\begin{align}
\label{eq:stationarity}
& \nabla_U   
\left. 
\bar{\mathcal{L}}(U,\lambda,\nu,L) 
\right|_{U= {U^m}}=0
\\
\label{eq:complem_ex}
& \lambda_i^{ \tp} {\bar C}(x(k+i),u(k+i)) = 0  &   i=0,...,e-1
\\
\label{eq:complem_dual}
& \lambda_i\geq 0 &   i=0,...,e-1
\end{align}
\end{subequations}
with the approximate Lagrangian $\bar{\mathcal{L}}(\cdot)$ defined as in \eqref{eq:lagrangian} where $\bar C(F_i(U,x(k)),u_i)$ replaces $C(F_i(U,x(k)),u_i)$.
Eq. \eqref{eq:primal} is only needed for the construction of candidate constraints and \eqref{eq:Fe} holds by construction.
Hence, both $\bar C(x(i),u(i))\leq 0$ and \eqref{eq:Fe} are not needed for learning the stage cost parameters [see \eqref{eq:kkt} with \eqref{eq:kktinfer}]. 
The feasibility problem in \eqref{eq:kktinfer} is convex if $l(x_i,u_i;L)$ is linear in $L$.
One can show that \eqref{eq:kktinfer} is always feasible using the convex hull as the candidate constraint set, provided optimal and noise-free data.

The Lagrange multipliers $\lambda_i$ and their values are essential in the proposed IOC approach in order to identify constraints from the candidate set.
Each scalar $\lambda_{i,j}$ can be interpreted as a force keeping the optimization problem \eqref{eq:seg_final} from violating the corresponding primal constraint $\bar C_j(x_i,u_i)\leq 0$ at time $i$.
In other words, the value of a dual variable $\lambda_{i,j}$ indicates the sensitivity of the optimization problem to the corresponding constraint \cite{Boyd2004}.
We define a measure for the identification of constraint $j$ as $\Lambda_j\geq \bar \Lambda$ with
\begin{align}
\label{eq:lambdasum}
\textstyle
\Lambda_j=\sum_{i=0}^{e-1}\lambda_{i,j},
\end{align}
where $\bar \Lambda \geq 0$ is a problem-specific threshold value.
If, e.g., $\Lambda_j=0$, the $j^{\rm th}$ constraint does not affect the minimizer of the optimization problem and does not represent a constraint.
If, however, the value of $\Lambda_j$ is very high, the minimizer is strongly affected by the constraint $j$ and the constraint is therefore crucial in explaining the observed trajectory.
Hence, $\Lambda_j$ relates directly to the importance of constraint $j$. 
The larger $\Lambda_j$, the more important is constraint $j$.
We utilize this relation to identify constraints from the candidate set.
The identified constraints are used in the predictive model, along with the learned parameters of the stage cost.

\subsection{Sub-optimal and noisy data}
\label{sec:prac}
Eq. \eqref{eq:kktinfer} will be feasible if, and only if, the trajectory is the solution of an optimal control problem of the form \eqref{eq:dp}.
In practice, however, even if this modeling assumption is correct, the feasibility problem in \eqref{eq:kktinfer} will not be satisfied exactly due to measurement or process noise.
In order to learn from sub-optimal or noisy data, we propose to solve  the relaxed problem 
\begin{equation}
\label{eq:opt_sub}
\begin{aligned}
 \min_{
L, \nu, \lambda_i}\
&
\left\| 
\left. \nabla_U \bar{\mathcal L}(U,\lambda,\nu,L) \right|_{U=U^m}
\right\|_2^2
\\
\quad {\rm s.t.}\
&
\begin{array}[t]{lll}
\lambda_i^\tp \bar C( x(k+i),u(k+i)) =0 
\\
\lambda_i \geq 0 \quad\quad\quad\quad\quad i=0,...,e-1.
\end{array}
\end{aligned}
\end{equation}
It is easy to verify that 
$\left\| \left. \nabla_U \bar{\mathcal{L}}(\cdot) \right|_{U=U^m}\right\|_2^2=0$ indicates optimality with respect to \eqref{eq:kktinfer} and that \eqref{eq:opt_sub} is always feasible.

\begin{remark} 
The use of a shortest path formulation in this work is reflected through the term $\nu^\tp (F_e(U,x(k)) - x(k+e))$ in \eqref{eq:lagrangian}.
Thus, an inverse approach with finite horizon as in \cite{Englert2017} is obtained with $\nu=0$.
\end{remark}
\begin{remark}[On active and identified constraints]
A constraint $j$ is active if $\bar C_j(x_i,u_i) = 0$ at time $i$.
Using the proposed method for constructing candidate constraints, there are always active candidate constraints.
However, it is important to note that not all active candidates yield $\Lambda_j>0$; it is also possible that candidate $j$ is active, i.e., $\bar C_j(x_i,u_i) = 0$, and $\Lambda_j=0$.
Inversely, $\Lambda_j=0$ does not mean that the candidate $j$ is never active but that the observed trajectory would have been the same with and without candidate $j$.
Hence, candidate constraint $j$ is not identified as constraint if $\Lambda_j=0$.
Section~\ref{sec:simulation} illustrates this concept in a simulation example.
\end{remark}

\section{Illustrative Example}
\label{sec:simulation}
In this section, we illustrate the IOC procedure and highlight its key benefits in simulation for a pendulum with the discrete-time state-space representation:
\begin{align*}
\begin{bmatrix}
x_1(k+1)
\\
x_2(k+1)
\end{bmatrix}
 = 
\begin{bmatrix}
x_1(k) + T_s x_2(k)
\\
x_2(k) - T_s\frac{g}{l}\sin x_1(k)
\end{bmatrix}
+T_s
\begin{bmatrix}
0 \\
\frac{1}{ml^2}
\end{bmatrix}
u(k)
\end{align*}
with $x_1(k)=\theta(t)$ at $t=kT_s$ and $T_s=0.01\rm s$, $g=9.81 \rm m/s^2$, $l=1\rm m$, and $m=1\rm kg$.
$\theta(t)$ is the angle and $u(t)$ is the applied torque in $\rm Nm$, where $|u(t)|\leq \bar u$ with $\bar u= 5\rm Nm$ is assumed to be the available torque.
In the following, we consider an optimal controller of the form \eqref{eq:dp} with constraints $u_{i}\leq 5$ and $-u_{i}\leq 5$ and stage cost $l(x_i,u_i;Q^{\rm gt},r^{\rm gt})=x_i^\tp Q^{\rm gt} x_i+r^{\rm gt} |u_i| +u_i^2$. 
The goal in this example is to learn the constraints and the parameters $Q^{\rm gt}$ and $r^{\rm gt}$.

\subsection{Learning with shortest path and finite horizon methods}
\label{sec:infvsf}
First, we highlight the main differences between the proposed shortest path formulation and two finite-horizon methods, i.e., a method using the KKT conditions similarly as in \cite{Englert2017} and a probabilistic IOC method which uses a likelihood maximization similarly as in \cite{Levine2012}.
The finite-horizon KKT method differs from the presented approach by virtue of the term $\nu^\tp (F_e(U,x(k)) - x(k+e))$ in \eqref{eq:lagrangian} and thus, follows readily with $\nu=0$ (removing the term).
The proposed IOC approach, similarly as the approach in \cite{Englert2017}, yield a convex semi-definite program, which can, e.g., be solved with MOSEK \cite{MOSEK}, whereas the likelihood maximization method yields a non-convex optimization problem, which in this example is solved with a projected gradient descent method.

Figure~\ref{fig:e} shows results with trajectory segments from $t=0$s through $t_e$ generated with $Q^{\rm gt}=I$ and $r^{\rm gt}=0$, where we enforce $Q\succeq 0$.
The middle plot shows that the proposed method only needs a segment from $t=0$s through $t_e\approx 0.5$s to find the ground truth. 
Both methods with finite horizon are not able to learn the ground truth even if the segments are long and $\theta(t)$ is close to stationarity (see $Q_{12} \approx 1$ at $t_e=1000$s).

\subsection{Learning with and without candidate constraints}
\label{sec:convsunc}
Next, consider the trajectories with $Q^{\rm gt}=10 I$ and $r^{\rm gt}=1$ for comparing methods with and without candidate constraints using segments from $t_i$ to $t_i+2$s [see Figure~\ref{fig:e200} (top)].

\subsubsection*{IOC, constrained (2nd plot from the top)}
The first step is to construct candidate constraints for the input $u(k)$:
\begin{subequations}
\label{eq:g}
\begin{align}
\label{eq:gu}
u(k)\leq g_u
\\
\label{eq:gl}
-u(k)\leq g_l
\end{align}
\end{subequations}
where $g_u$ and $g_l$ depend on the chosen segment and are displayed in red (diamond markers) and green (triangle markers), respectively.
The algorithm returns $Q$ and $r$ as well as $\Lambda_1$ and $\Lambda_2$, which are defined in \eqref{eq:lambdasum} and correspond to the candidate constraints \eqref{eq:gu} and \eqref{eq:gl}, respectively.
The parameters $Q$ and $r$ are very close to the ground truth for all $t_i$.
If $t_i<0.96$s, $g_u=5$ and $\Lambda_1>0$ suggesting that $u(k)\leq 5$ is indeed a constraint.
If $t_i>0.96$s, $g_u<5$ and $\Lambda_1=0$ suggesting that $u(k)\leq g_u<5$ is not a constraint, which is correct, as the constraint is not active.
For all $t_i$, $g_l<5$ and $\Lambda_2=0$ (not displayed) suggesting that $-u(k)\leq g_l<5$ is not a constraint.
Overall, $Q$ is learned reliably and for $t_i<0.96$, $u(k)\leq u_{max}$ is learned as constraint.
The trajectory does not provide conclusive evidence about the existence of a lower bound, i.e., $-u(k)\leq u_{max}$, which is expected as $g_l<5\ \forall t_i$.

\subsubsection*{IOC, unconstrained (3rd plot from the top)}
If $t_i>0.96$s, $Q$ and $r$ are very close to the ground truth, which is expected since the control problem is virtually unconstrained in these segments.
However, if no candidate constraints are constructed a priori, $Q$ and $r$ differ for $t_i<0.96$s as the observed trajectory cannot be explained by means of an unconstrained optimal control problem.

\subsubsection*{Finite-horizon IOC, constrained (bottom plot)}
The method learns the constraint $u(k)\leq 5$ using similar arguments as the proposed shortest path IOC method; however,
it fails to capture the ground truth stage cost parameters with $r\approx 0$ and $Q$ not close to $Q^{\rm gt}$ for all trajectory segments. 
\begin{figure}[t]
      \centering
     \includegraphics[width=0.96\columnwidth]{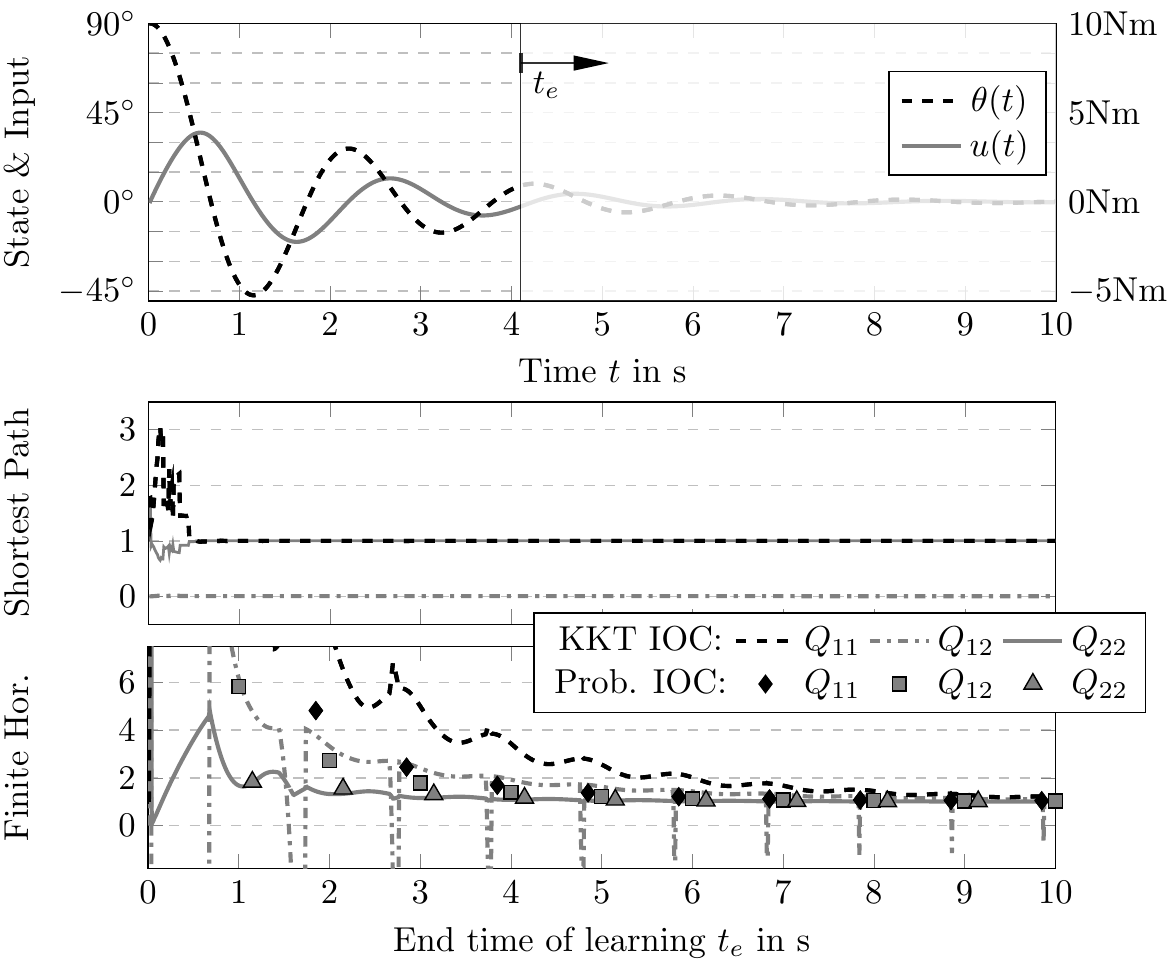}
      \caption{{Top: State and input trajectories. 
      Middle: $Q$ learned with shortest path IOC. 
      Bottom: $Q$ learned with two finite-horizon methods: KKT and maximum likelihood.}}
      \label{fig:e}
\end{figure}
\begin{figure}[t]
      \centering
     \includegraphics[width=0.96\columnwidth]{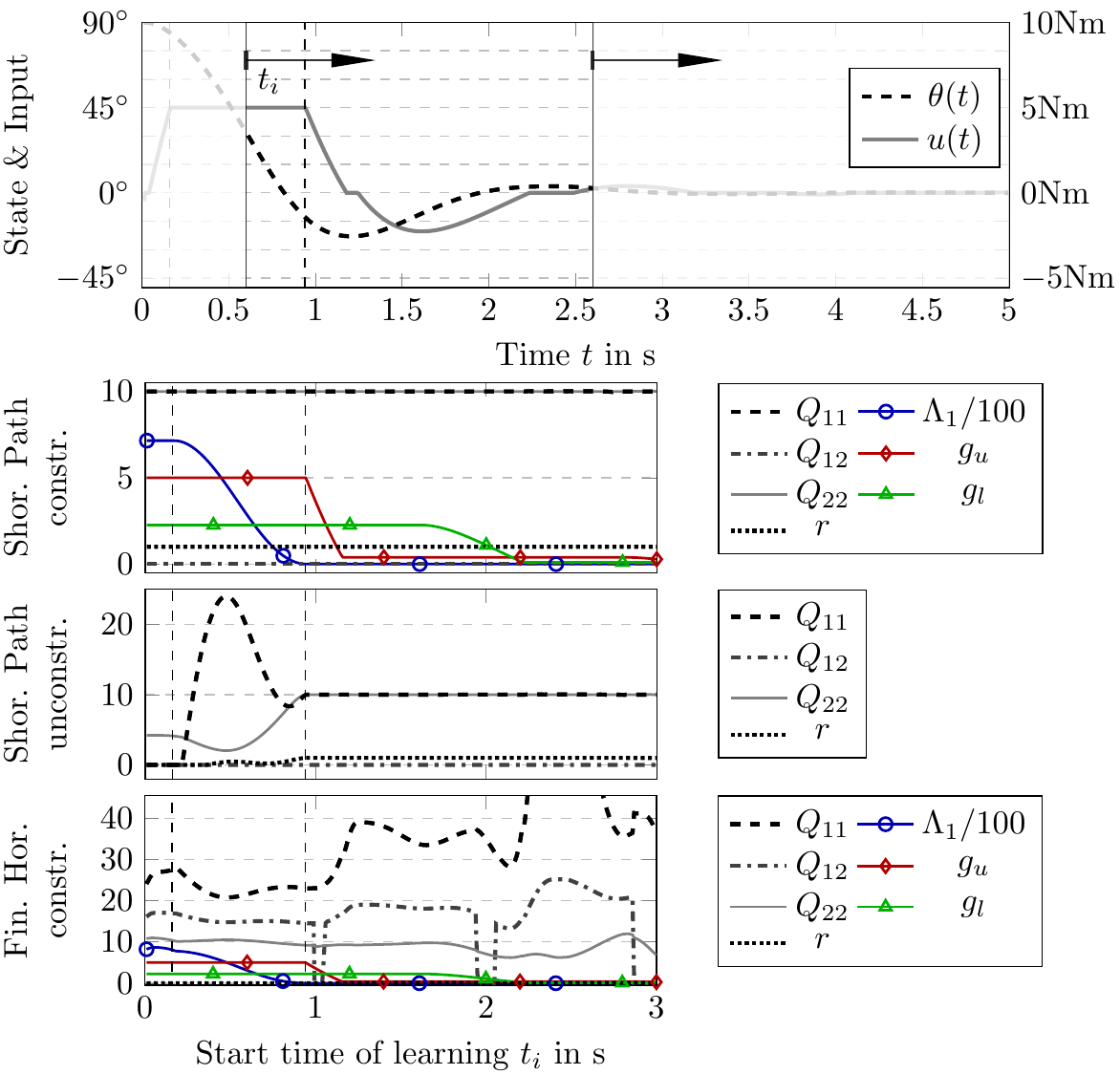}
      \caption{
Top: State and input trajectories. 
2nd from the top: Parameters learned with the proposed method with candidate constraints. 
3rd from the top: $Q$ learned without candidate constraints. 
Bottom: Parameters learned with finite-horizon method with candidate constraints. 
      }
      \label{fig:e200}
\end{figure}
\subsection{Summary of analysis}
In this section, we have illustrated the benefits of the proposed approach.
In particular, we showed the candidate constraint construction and how to simultaneously learn parameters of the stage cost and identify constraints from the candidate set.
Further, we have shown that the proposed shortest path formulation only requires a short segment of measurements to learn the stage cost parameters and identify constraints, whereas finite-horizon approaches require a comparably long segment. 
Moreover, we have shown the importance of the candidate constraint set as a substantial component for correctly identifying the stage cost. 

\section{Manipulation of a Passive Kinematic Object}
\label{sec:robot}
In this section, we show how to train a predictive model for human movements in a manipulation task using the proposed method. 
We conducted experiments with three human subjects where the underlying hypothesis is that humans plan their movements by solving a constrained optimal control problem.

\subsection{Experiment description and system modeling} 
In the experiment, the human subjects manipulated one end of an object whose position was changed consecutively by a robot.
The manipulation task was set up to provide a foreseeable environment triggering the human's predictive motor control component such that the reactive control component can be disregarded (at least at the beginning of the movement).
The object was articulated and unactuated and was composed of three lightweight wooden links and one cardboard handle, which acted as both a revolute joint and the manipulation point (see Figure~\ref{fig:hri}). 
Hence, it had four revolute joints, one connecting its end link to the robot (joint 1), two connecting the three wooden links (joint 2 \& 3), and 
the cardboard handle (joint 4), which was gripped by the subject such that the forearm and the handle acted as a single rigid body. 

After familiarizing themself with the robot, the human was instructed to achieve specific angles for two of the object's joints, 
the joint connecting the object to the robot  (joint 1 in Figure~\ref{fig:hri}) and the first joint after that (joint 2), 
both of which have vertical rotational axes (perpendicular to the ground).
The target angles were communicated to the subjects visually by reference-markers attached to the links.
The subjects were asked to only move when the robot was stationary. 
First, the robot moved to disturb the system state.
When the robot's motion ended, the subject corrected the reference error.
Motion capture sensors were placed on all links of each kinematic chain and recorded through the Phasespace Python API.

The derivation of the individual movement model, i.e., the system dynamics, of each subject is based on modeling the passive kinematic object and the human arm as a kinematic chain \cite{isb1} whose parameters were identified from measurements. 
In this model, the base frame is attached to the torso and the manipulation frame is attached to the grip location of the hand. 
Ball joints such as the shoulder joint are modeled as three revolute joints in series with orthogonal axes intersecting at the center of the joint. 
This leads to the ball joint configuration being described with intrinsic Euler angles rotating around a point in space \cite{shoulder,ball}. 
The elbow joint is modeled as a single revolute joint. 
The wrist is modeled as three revolute joints in series; however a wrist brace was used in the experiment to restrict the motions in the frontal and sagittal plane, that is, waving and flapping motions. 
Pronation and supination (twisting about the forearm) could not be restricted by the brace; however the experiment was designed such that the kinematic chain of the object itself constrained this movement.
Both the placement of the motion capture markers and the kinematic modeling are shown in Figure~\ref{fig:mocap}.
\begin{figure}[t]
      \centering
     \includegraphics[width=1\columnwidth]{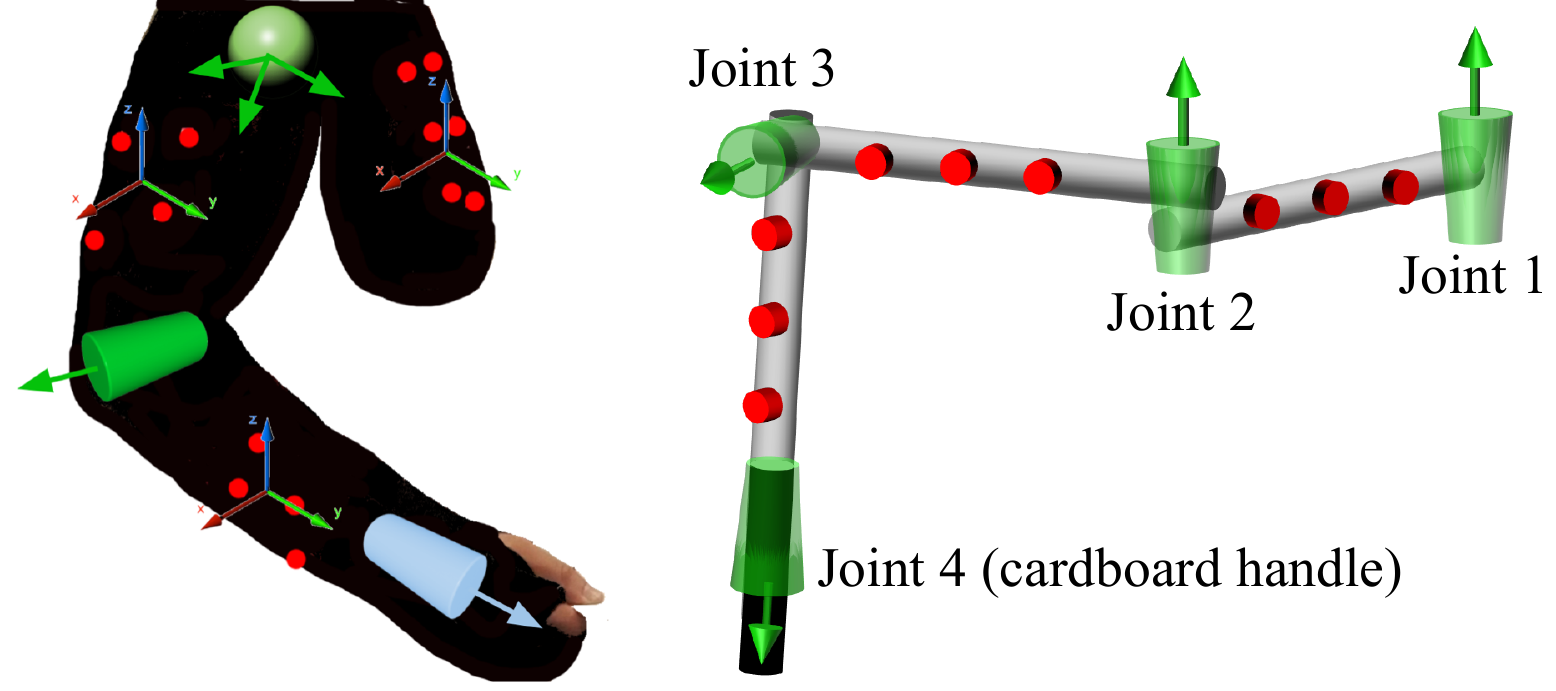}
  \\      \includegraphics[width=0.825\columnwidth]{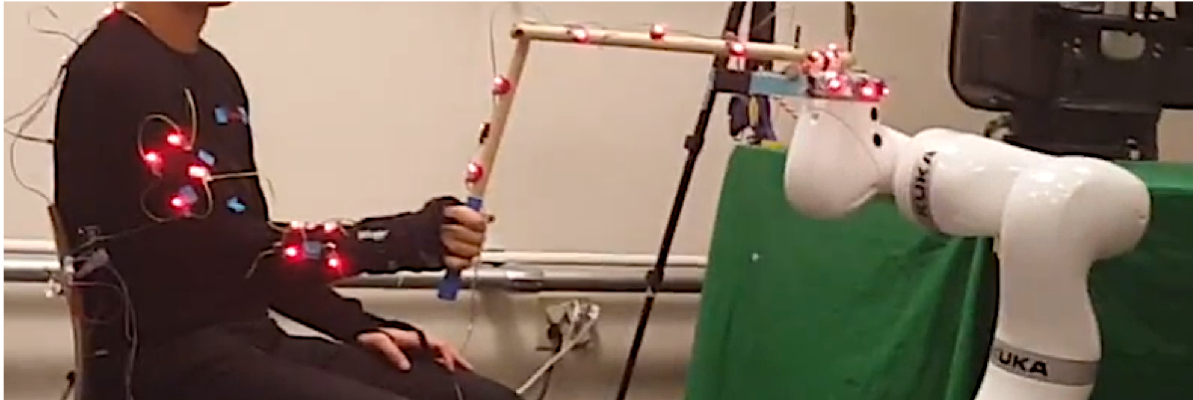}
      \caption{
      Top: Modeling of the human arm and the object. 
      Bottom: Experiment setup with the Kuka LBR iiwa robot.
      Joints included in the model are shown in green, while the blue joint represents a freedom of motion that was constrained by experiment design. 
      The motion capture markers are illustrated in red.
      }
      \label{fig:hri}
      \label{fig:mocap}
\end{figure}

The system state $
x(t) = 
[\
x_h (t)^\tp\
x_o (t)^\tp\
]^\tp$ is composed of the joint angles of the human, $x_h(t) \in \mathbb{R}^4$, and of the object, $x_o(t) \in \mathbb{R}^4$.
The input to the system, 
$u(t)=\dot{x}_h(t)$,
is given by the joint velocities of the human arm. 
The velocities of the object joint angles are given by:
\begin{equation}
\dot{x}_o(t) = J^{\ddagger}_{o}(x_o(t)) V_{g}(t), 
\label{eq:xrdot}
\end{equation}
where $J_{o}(x_o(t)) \in \mathbb{R}^{6 \times 4}$ is the Jacobian mapping joint velocities of the object to $V_g(t)$, the absolute twist velocity of the manipulation frame, and $J^{\ddagger}_{o}(x_o(t)) \in \mathbb{R}^{4 \times 6} $ denotes its Moore-Penrose pseudo-inverse \cite{inverse}. 
Given that the human maintained a stationary base in the experiment, we can express $V_g(t)$ in terms of the human arm joint velocities and the Jacobian of the human arm, $J_h(x_h(t))  \in \mathbb{R}^{6 \times 4}$:
\begin{equation}
V_g(t) = J_{h}(x_h(t))\dot x_h(t). 
\label{eq:griptwist}
\end{equation}
Using \eqref{eq:xrdot} and \eqref{eq:griptwist}, $\dot{x}_o(t) = J^{\ddagger}_{o}(x_o(t)) J_{h}(x_h(t))\dot x_h(t)$, and thus, the overall dynamics of the system is given by
\begin{equation}
\begin{bmatrix}
\dot x_h (t)
\\
\dot x_o (t)
\end{bmatrix}
=
\begin{bmatrix}
I
\\
J^\ddagger_o(x_o(t)) J_h(x_h(t))
\end{bmatrix}
u(t). \label{eq:exp_sys_desc}
\end{equation}
In order to obtain the Jacobians, the twists representing the joints in each kinematic chain are identified by recording traces of the subject's range of motion and applying the techniques in \cite{iros15}.
The Jacobians $J_h(x_h(t))$ and $J_o(x_o(t))$ in \eqref{eq:exp_sys_desc} are derived using the formula for the body Jacobian as in \cite{math}. 

A discrete-time representation of \eqref{eq:exp_sys_desc} is derived using an Euler-forward scheme with the sampling time $T_s$:
\begin{align*}
\begin{bmatrix}
x_h (k+1)
\\
 x_o (k+1)
\end{bmatrix}
=
\begin{bmatrix}
x_h (k)
\\
 x_o (k)
\end{bmatrix}
+
T_s
\begin{bmatrix}
I
\\
J^\ddagger_o(x_o(k)) J_h(x_h(k))
\end{bmatrix}
u(k).
\end{align*}
An unscented Kalman filter as described in \cite{ukf} is implemented to estimate the system state, where a static process model is chosen to smoothen the estimated angles, since measurement noise is amplified by the kinematic transformation.
The inputs are computed as $u(k)=(x_h(k+1)-x_h(k))/T_s$.

\subsection{Learning predictive model for human movements}
\label{sec:ex}

Each of the three subjects maneuvered the object 15 times to correct the reference error induced by the robot.
For each experiment, we recorded the entire trajectory from the start of the human movement until the subject was instructed to remain stationary.
For reasons discussed in Section~\ref{sec:compareFI}, we use the initial $1.2$s, i.e., $e=65$ in \eqref{eq:kktinfer} with sampling time $T_s=0.0185$s for learning, which corresponds to roughly 60\% of each trajectory.
In order to generalize from the available sparse data, we utilize leave-one-out cross-validation \cite{James2013}, where we learn the parameters of the predictive model 15 times, each time removing one of the recorded trajectories. 
This is done to assess the robustness of the model. 
 
\subsubsection{Design choices}
\label{sec:12norm}
In this work, we train a predictive model with quadratic stage cost.  
Our goal is to exemplify the proposed method to build a simple predictive model of human movement.
Quadratic stage costs are commonly used as objective function in optimal control offering a good compromise between complexity and expressivity, where the cost minimizes a trade-off between tracking a given target and control effort.
Note that higher-order or more complex stage cost terms are possible with the proposed framework and there are various possibilities to express human movements \cite{Oguz2018b}.
Given that the task requires tracking a reference for only two of the states, we take a stage cost of the form
\begin{align*}
l(x_i,u_i) =\ & (Sx_i - y_s)^\tp Q (Sx_i - y_s) + u_i^\tp Ru_i,
\end{align*}
where $y_s \in \mathbb{R}^2$ is the reference, $S=[\ 0_{2\times 4}\ I_2\ 0_{2\times 4}\ ]$ selects the states (two joint angles of the object) tracking $y_s$, and $Q,R$ are the penalty parameters.
We enforce $Q,R\succeq 0$ in order to obtain physically meaningful penalties for both deviation to the target angles and control effort. 
Also, we restrict the input penalties to $\sum_{i=1}^m R_{ii}=1$, which fixes the scaling of the stage cost and avoids the trivial solution of all parameters being zero.
We train one predictive model without constraints and one with a polytopic candidate constraint set for each subject.

\subsubsection*{Candidate constraints}
The object's states $x_o(k)$ are modeled as unconstrained. 
The human's states $x_h(k)$ consist of the three shoulder joint angles and the elbow angle; the inputs $u(k)$ are the three angular velocities of the shoulder joint and the angular velocity of the elbow. 
Constraints  on joint angles directly relate to constraints on $x_h(k)$, velocity constraints relate to constraints on $u(k)$, and acceleration constraints are computed as a rate constraint: $a(k) = (u(k+1)-u(k))/T_s$. 

\subsubsection{Learning results}
\label{sec:infres}
Figure~\ref{fig:results} shows the mean and standard deviation of $Q$ and $R$ obtained with the proposed IOC method. 
The most distinct feature is the scale of the parameters $Q_{ij}$, varying from order $10^{-2}$ for Subject~1, $10^{-3}$ for Subject~2, to $10^{-6}$ for Subject~3.
The second most distinct feature is the difference in the diagonal elements of $R$ that reflect movement of the shoulder, i.e., $R_{11}$, $R_{22}$, and $R_{33}$, whereas the penalty on elbow velocity is comparable, i.e., $R_{44}\approx 0.2$ for all subjects.
Off-diagonal elements in $R$ are similar across subjects.

Table~\ref{tb:lagrange} shows the sum of Lagrange multipliers as in \eqref{eq:lambdasum}, which are used to identify constraints from the candidate constraint set.
The Lagrange multipliers are stated as the mean over all experiments to identify constraints on angle, velocity, and acceleration of shoulder and elbow joints. 
We consider constraint $j$ as identified if the corresponding Lagrange multiplier $\Lambda_j\geq \bar \Lambda = 10^{-3}$.
It can be seen that constraints are predominantly on shoulder movement.
Constraints on elbow movement seem less important for all subjects.
Note that even though the stage cost parameters in Figure~\ref{fig:results}	 obtained with constrained and unconstrained IOC are relatively close for the individual subject, the resulting prediction models differ by virtue of the constraints identified as in Table~\ref{tb:lagrange}.
\begin{figure}[htb]
      \centering
     \includegraphics[width=1\columnwidth]{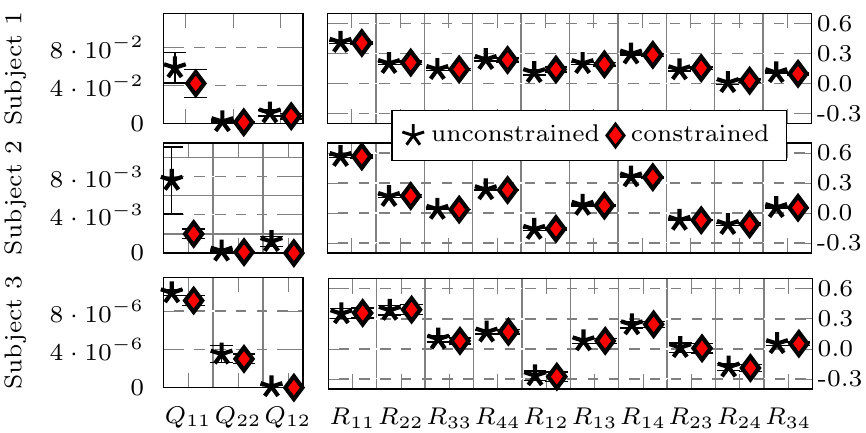}
      \caption{
      Mean and standard deviation of cost parameters $Q$ and $R$ for unconstrained learning (black stars) and constrained learning (red diamonds). 
      }
      \label{fig:results}
\end{figure}
\begin{table}[tbh]
\setlength\tabcolsep{3.9pt} 
\caption{
  Lagrange multipliers to identify constraints
}
\begin{center}
\def\arraystretch{1.05}
\label{tb:lagrange}
\begin{tabular}{llllll llll}
\hline
&  \multicolumn{2}{c}{Angle} & \multicolumn{2}{c}{Velocity} & \multicolumn{2}{c}{Acceleration} \\
& Shoulder  & Elbow & Shoulder & Elbow  & Shoulder & Elbow\\
\hline
Subject 1
& $22.8$ & $0$ & $3.31e$-$2$ & $0$  & $1.38e$-$2$ & $8.66e$-$4$
\\
Subject 2
& $11.5$ & $0$ & $2.78e$-$1$ & $0$ & $2.15e$-$2$ & $6.98e$-$4$
\\
Subject 3 
& $3.50$ & $0$ & $4.36e$-$1$ & $2.86e$-$4$ & $1.05e$-$1$ & $0$
\\
\hline
\end{tabular}
\end{center}
\end{table}

\subsection{Evaluation of trained human manipulation model}
\label{sec:evalinf}
The difficulty in evaluating the quality of the trained model for human-centered experiments is the lack of a ground truth as reference.
We therefore assess the quality of modeling human movement as an optimal control problem \eqref{eq:dp} by comparing the true trajectory with the prediction provided by the model. 
The predictions are obtained by solving problem \eqref{eq:segment} with the learned stage cost and identified constraints from the initial position at time $t=0$s through $t=t_e=1.2$s using IPOPT \cite{Biegler2009} (see Figure~\ref{fig:graph_con} for a sample prediction).
We compute 15 sets of stage cost matrices by leaving out one trajectory for each learning. 
In order to evaluate the quality of the trained model, we use the left-out measured trajectory for validation against the predicted trajectory, which would result from \eqref{eq:segment} with the  learned stage cost and constraints.
This technique ensures that the predicted trajectory is not biased by the corresponding measured trajectory.
The mismatch between prediction $\hat x_i^j\in \mathbb{R}^8$ and measurement $x^j(i)\in \mathbb{R}^8$ of trajectory $j$ is measured as the \underline root \underline mean \underline square (RMS) error:
\begin{align}
\label{eq:error}
\textstyle
E^j
=
\sqrt{
\frac{1
}{
8e}
\sum_{i=1}^{e}
\|
\hat x_i^j - x^j(i)
\|_2^2
}.
\end{align}

\subsubsection{Intra-subject evaluation}
First, we compute the errors $E^j$ in \eqref{eq:error} for each trajectory $j$ per subject. 
Figure~\ref{fig:graph_con} shows one measured trajectory of Subject~2 and the predictions obtained with the unconstrained and the constrained model.
The prediction obtained with the unconstrained model shows a larger RMS error, best seen in the plot of human joint angles.
The prediction obtained with the constrained model shows a lower error. 
Table~\ref{tb:graph_con} presents the mean and standard deviation over all 15 prediction errors for all subjects.
It shows that, generally, the predictions have low errors ($<3.3^\circ$), where Subject~1 has the lowest ($<1^\circ$).
On average, the presence of constraints improve the predictions by 20\%-25\%. 
\begin{table}[h]
\setlength\tabcolsep{8.2pt}
\caption{
Prediction errors: Unconstrained vs. constrained
}
\begin{center}
\def\arraystretch{1.05}
\label{tb:graph_con}
\begin{tabular}{l cccccclllll llll}
\hline
Constraint set & unconstrained  & constrained 
 \\
\hline
Subject 1 
& $0.96^\circ \pm 0.49^\circ$   
& $0.78^\circ \pm 0.42^\circ$ 
\\
Subject 2 
& $3.26^\circ \pm 1.75^\circ$      
& $2.45^\circ \pm 0.87^\circ$  
\\
Subject 3 
& $1.87^\circ \pm 1.00^\circ$ 
& $1.56^\circ \pm 0.79^\circ$  
 \\ 
\hline
\end{tabular}
\end{center}
\end{table}

\subsubsection{Inter-subject cross-evaluation}
Next, we analyze the individuality of the trained models, where the error $E^j$ in \eqref{eq:error} is computed three times for each trajectory $j$:
We compute the error using the prediction model of the subject who generated trajectory $j$; 
then, we compute $E^j$ of the predicted trajectory $\hat x_i^j$ using the other subjects' prediction models, where we use the proposed IOC method with polytopic constraints.

Figure~\ref{fig:inter} shows an example of a measured trajectory from Subject~1, compared against predictions generated with the models of all subjects. 
The measured trajectory and the predicted trajectory of Subject~1 are close (error: $0.55^\circ$).
The predicted trajectories of Subject~2 \& 3 show higher errors.
Table~\ref{tb:inter} states the mean and standard deviation of the errors between measurements of Subject~$j$ in columns $j$ and prediction with objective of Subject~$i$ in rows $i$ over all trajectories.
Hence, a good separation between the subjects means large entries in the off-diagonal entries $i\neq j$.
The results show a high confidence in separating Subject~1 from the other two with high confusion errors ($3.23^\circ$, $2.39^\circ$ vs. $0.78^\circ$). 
The confidence to identify Subject~2 from a given trajectory is also high with confusion errors ($3.99^\circ$, $3.59^\circ$ vs. $2.45^\circ$). 
A less clear separation is observed for Subject~3, where the confusion errors are lower ($2.22^\circ$, $1.91^\circ$ vs. $1.56^\circ$). 
Overall, this cross-validation suggests that the models trained to predict the distinct motor behavior are individual.
\begin{figure}[!t]
      \centering
     \includegraphics[width=0.87\columnwidth]{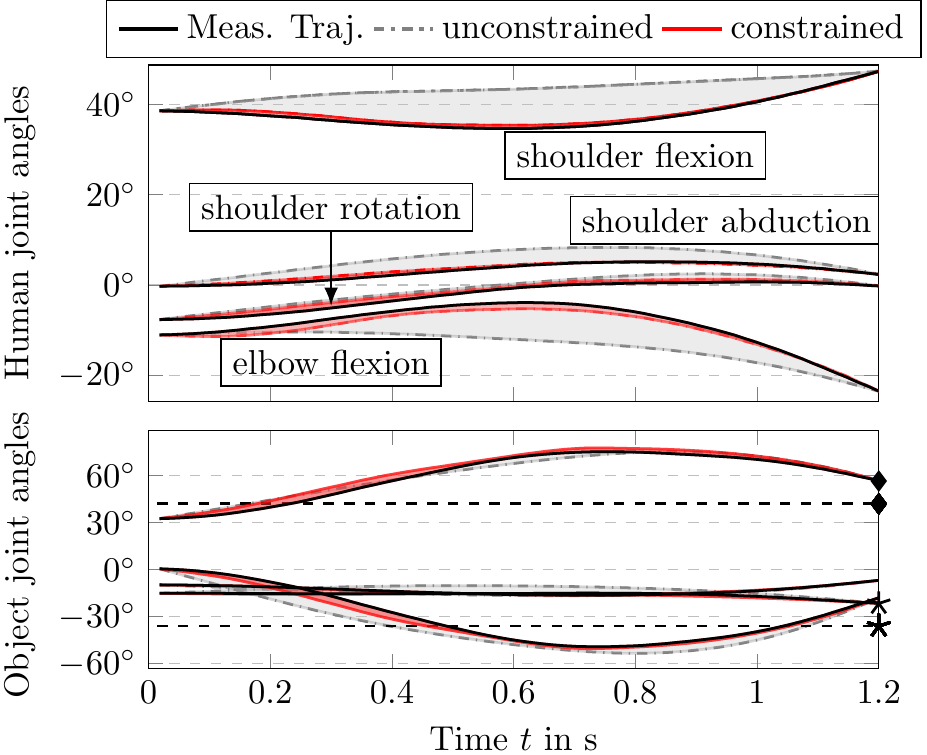}
      \caption{
      Measured trajectory in black, predicted trajectory with the unconstrained model in gray (error $4.17^\circ$) and the constrained model in red (error $1.40^\circ$). 
The upper plot shows the shoulder flexion, shoulder abduction, and shoulder rotation, as well as elbow flexion.
The object states to be tracked are shown in the lower plot as dashed black lines and are related to the corresponding joints with a diamond and a star marker.
      }
      \label{fig:graph_con}
\end{figure}
\begin{figure}[!t]
      \centering
     \includegraphics[width=0.9\columnwidth]{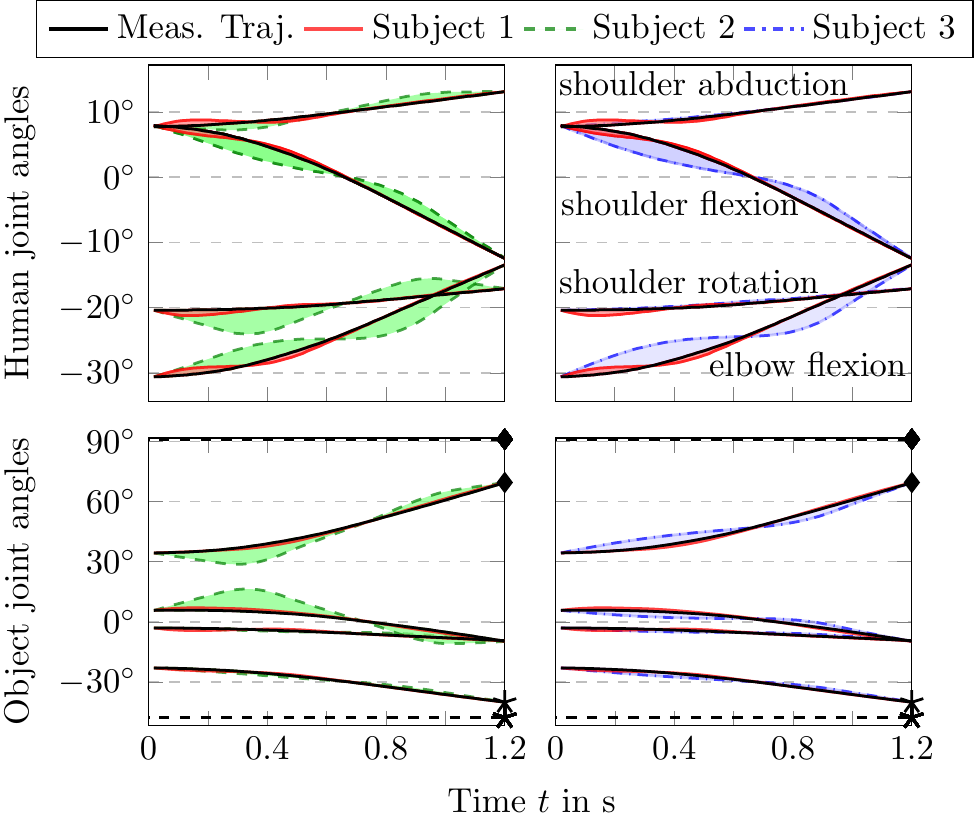}
      \caption{
      Measured trajectory of Subject~1 in black, predicted trajectory of Subject~1 in red (error: $0.55^\circ$).
      Left plots: Predicted trajectory of Subject~2 in green (error: $3.62^\circ$). 
      Right plots: Predicted trajectory of Subject~3 in blue (error: $1.97^\circ$). 
      }
      \label{fig:inter}
\end{figure}
\begin{table}[!h]
\caption{
Prediction errors: Cross-validation between subjects
}
\begin{center}
\def\arraystretch{1.05}
\label{tb:inter}
\begin{tabular}{ccccccc llllll llll}
\hline
\multicolumn{2}{l}{Trajectories of } & Subject 1  & Subject 2 & Subject 3 \\
\hline
\multirow{3}{*}{\rotatebox{90}{Model}} 
& 
Subject 1 &  
$0.78^\circ \pm 0.42^\circ$ & $3.99^\circ\pm 1.53^\circ$ & $2.22^\circ\pm 1.14^\circ$ 
\\
&
Subject 2 &  
$3.23^\circ\pm 1.03^\circ$ & $2.45^\circ\pm 0.87^\circ$ & $1.91^\circ\pm 0.93^\circ$
\\
&
Subject 3 &  
$2.39^\circ\pm 0.68^\circ$  & $3.59^\circ\pm 1.68^\circ$ & $1.56^\circ\pm 0.79^\circ$
\\
\hline
\end{tabular}
\end{center}
\end{table}

\subsubsection{Benefit of shortest path formulation}
\label{sec:compareFI}
In the following, we discuss the advantages of using a shortest path formulation over a finite horizon in the context of the considered application.  
If the entire trajectory is used for training and stationarity is reached, i.e., $e$ is large, both the proposed shortest path method and a finite-horizon method are similar.
In the context of the considered application, however, we encountered two main challenges when considering the entire trajectory. 
Firstly, in the final part of the trajectory, the target angles are more or less reached and the measured signals are close to stationarity.
As a result, the signal-to-noise ratio is low and can corrupt learning. 
Secondly, we observed small corrections around the target angles in the experiment suggesting the presence of reactive movements, which renders the final part of the trajectory not indicative of the predictive human motor control.

For shorter segments, the predictive component dominates both noise and reactive component but the solution from a finite-horizon formulation diverts from that with a shortest path (see Section~\ref{sec:simulation}).
The proposed IOC approach allows for using only the initial part of the trajectory for learning where stationarity is not reached.
Overall, the presence of both reactive human motor control component and noise do not fulfill the assumption of optimal execution with respect to \eqref{eq:dp}.
We used the initial 60\% of the trajectory, which was observed to be a good trade-off between segment-length and avoidance of the reactive component.

Figure~\ref{fig:e_study} revisits the trajectory in Figure~\ref{fig:graph_con} to illustrate the above discussion on the horizon length $e$.
The upper plot shows the complete recorded trajectory, where some correction around the target angles can be observed for $t\geq 1.4$s (see joint angle marked by the diamond symbol).
The lower plot displays the RMS error \eqref{eq:error} of the predictions that result from different horizon lengths $e$.
The RMS error increases as a result of both the correction around the target angles and the low signal-to-noise ratio.
It highlights that the modeling assumption as an open-loop optimal control problem is suitable for the predictive part, but not in the presence of the reactive component.
\begin{figure}[h]
      \centering
     \includegraphics[width=0.82\columnwidth]{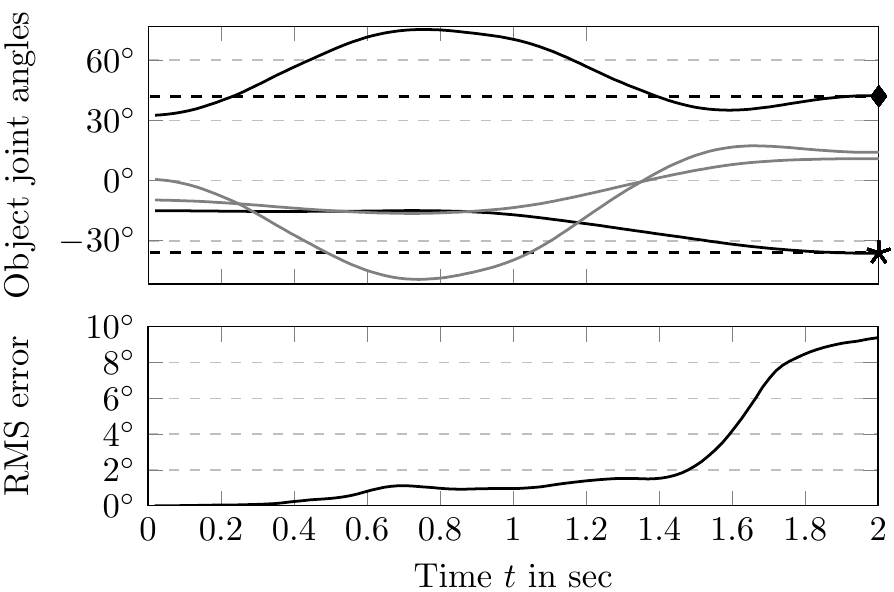}
      \caption{
      Top: Target angles to be tracked are shown as dashed black lines and are related to the corresponding joints with a diamond and a star marker.
Bottom: RMS error of prediction with different horizon lengths $e$.
      }
      \label{fig:e_study}
\end{figure}

\section{Conclusion}
\label{sec:conclusion}
This paper presented an inverse optimal control approach to learn both cost function parameters and constraints from demonstrations, i.e., state and input measurements of dynamical systems.
The shortest path formulation is shown to be the inverse problem to an infinite-horizon optimal control problem. By relying on the Karush-Kuhn-Tucker conditions, the problem is convex for cost functions that are linear in their parameters.
We set up a human manipulation experiment to exemplify the proposed approach for modeling and predicting human arm movements. 
In the experiment, three human subjects manipulated one end of a passive kinematic object whose position was changed consecutively by a robot. 
The benefits of using a shortest path formulation and the consideration of constraints on human movements were highlighted.
The results showed that a model with good predictive capabilities can be learned using a quadratic cost function for both states and inputs together with constraints on shoulder movements using the proposed formulation. 
Finally, it was shown that the predictive models of the human subjects are individual. 


\bibliographystyle{IEEEtran}
\bibliography{root.bbl}

\begin{thebibliography}{10}
\providecommand{\url}[1]{#1}
\csname url@samestyle\endcsname
\providecommand{\newblock}{\relax}
\providecommand{\bibinfo}[2]{#2}
\providecommand{\BIBentrySTDinterwordspacing}{\spaceskip=0pt\relax}
\providecommand{\BIBentryALTinterwordstretchfactor}{4}
\providecommand{\BIBentryALTinterwordspacing}{\spaceskip=\fontdimen2\font plus
\BIBentryALTinterwordstretchfactor\fontdimen3\font minus
  \fontdimen4\font\relax}
\providecommand{\BIBforeignlanguage}[2]{{%
\expandafter\ifx\csname l@#1\endcsname\relax
\typeout{** WARNING: IEEEtran.bst: No hyphenation pattern has been}%
\typeout{** loaded for the language `#1'. Using the pattern for}%
\typeout{** the default language instead.}%
\else
\language=\csname l@#1\endcsname
\fi
#2}}
\providecommand{\BIBdecl}{\relax}
\BIBdecl

\bibitem{Lee2012}
J.~Lee, P.~H. Chang, and D.~G. Gweon, ``A cost function inspired by human arms
  movement for a bimanual robotic machining,'' in \emph{Proc. {IEEE} Int. Conf.
  Robotics and Automation}, 2012, pp. 5431--5436.

\bibitem{Wolpert2011}
D.~M. Wolpert, J.~Diedrichsen, and J.~R. Flanagan, ``Principles of sensorimotor
  learning,'' \emph{Nature Rev. Neurosci.}, vol.~12, no.~12, p. 739, 2011.

\bibitem{Kawato1999}
M.~Kawato, ``Internal models for motor control and trajectory planning,''
  \emph{Current Opinion in Neurobiology}, vol.~9, no.~6, pp. 718--727, 1999.

\bibitem{Johansson1988}
R.~S. Johansson and G.~Westling, ``Coordinated isometric muscle commands
  adequately and erroneously programmed for the weight during lifting task with
  precision grip,'' \emph{Experimental Brain Res.}, vol.~71, no.~1, pp. 59--71,
  1988.

\bibitem{Johansson1992}
R.~S. Johansson and K.~J. Cole, ``Sensory-motor coordination during grasping
  and manipulative actions,'' \emph{Current Opinion in Neurobiology}, vol.~2,
  no.~6, pp. 815--823, 1992.

\bibitem{Fu2010}
Q.~Fu, W.~Zhang, and M.~Santello, ``Anticipatory planning and control of grasp
  positions and forces for dexterous two-digit manipulation,'' \emph{J.
  Neurosci.}, vol.~30, no.~27, pp. 9117--9126, 2010.

\bibitem{Bellman1957}
R.~E. Bellman, \emph{Dynamic Programming}.\hskip 1em plus 0.5em minus
  0.4em\relax Courier Dover Publications, 1957.

\bibitem{Kuhn1951}
H.~W. Kuhn and A.~W. Tucker, ``Nonlinear programming,'' in \emph{2nd Berkeley
  Symposium on Mathematical Statistics and Probability}.\hskip 1em plus 0.5em
  minus 0.4em\relax University of California Press, 1951.

\bibitem{Kalman1964}
R.~E. Kalman, ``When is a linear control system optimal?'' \emph{J. Basic
  Engineering}, vol.~86, no.~1, pp. 51--60, 1964.

\bibitem{Priess2015}
M.~C. Priess, R.~Conway, J.~Choi, J.~M. Popovich, and C.~Radcliffe, ``Solutions
  to the inverse {LQR} problem with application to biological systems
  analysis,'' \emph{{IEEE} Trans. Control Syst. Technol.}, vol.~23, no.~2, pp.
  770--777, Mar. 2015.

\bibitem{Menner2018}
M.~Menner and M.~Zeilinger, ``Convex formulations and algebraic solutions for
  linear quadratic inverse optimal control problems,'' in \emph{Proc. Eur.
  Control Conf.}, 2018, pp. 2107--2112.

\bibitem{Mombaur2010}
K.~Mombaur, A.~Truong, and J.-P. Laumond, ``From human to humanoid locomotion -
  an inverse optimal control approach,'' \emph{Autonomous Robots}, vol.~28,
  no.~3, pp. 369--383, 2010.

\bibitem{Puydupin2012}
A.-S. Puydupin-Jamin, M.~Johnson, and T.~Bretl, ``A convex approach to inverse
  optimal control and its application to modeling human locomotion,'' in
  \emph{Proc. {IEEE} Int. Conf. Robotics and Automation}, 2012, pp. 531--536.

\bibitem{Englert2017}
P.~Englert, N.~A. Vien, and M.~Toussaint, ``Inverse {KKT}: Learning cost
  functions of manipulation tasks from demonstrations,'' \emph{Int. J. Robotics
  Res.}, vol.~36, no. 13--14, pp. 1474--1488, 2017.

\bibitem{Majumdar2017}
A.~Majumdar, S.~Singh, A.~Mandlekar, and M.~Pavone, ``Risk-sensitive inverse
  reinforcement learning via coherent risk models,'' in \emph{Robotics: Sci.
  and Syst.}, 2017.

\bibitem{Boyd2004}
S.~Body and L.~Vandenberghe, \emph{{Convex Optimization}}.\hskip 1em plus 0.5em
  minus 0.4em\relax Cambridge, U.K.: Cambridge University Press, 2004.

\bibitem{Barber1996}
C.~B. Barber, D.~P. Dobkin, and H.~Huhdanpaa, ``The quickhull algorithm for
  convex hulls,'' \emph{ACM Trans. on Math. Software}, vol.~22, no.~4, pp.
  469--483, 1996.

\bibitem{Levine2012}
S.~Levine and V.~Koltun, ``Continuous inverse optimal control with locally
  optimal examples,'' in \emph{Proc. 29th Int. Conf. Machine Learning}, 2012,
  pp. 475--482.

\bibitem{MOSEK}
{MOSEK ApS}, ``{The MOSEK optimization software},'' \emph{Online:
  http://www.mosek. com/}, 2010.

\bibitem{isb1}
G.~Wu, S.~Siegler, P.~Allard, C.~Kirtley, A.~Leardini, D.~Rosenbaum,
  M.~Whittle, D.~D. D'Lima, L.~Cristofolini, H.~Witte, O.~Schmid, and
  I.~Stokes, ``{ISB} recommendation on definitions of joint coordinate system
  of various joints for the reporting of human joint motion -- part i: ankle,
  hip, and spine,'' \emph{J. Biomechanics}, vol.~35, no.~4, pp. 543--548, 2002.

\bibitem{shoulder}
C.~Bregler and J.~Malik, ``Tracking people with twists and exponential maps,''
  in \emph{Proc. IEEE Comput. Soc. Conf. Comput. Vision and Pattern
  Recognition}, 1998, pp. 8--15.

\bibitem{ball}
G.~P. Moll and B.~Rosenhahn, ``Ball joints for marker-less human motion
  capture,'' in \emph{Workshop Appl. Comput. Vision}, 2009, pp. 1--8.

\bibitem{inverse}
A.~Ben-Israel and T.~N.~E. Greville, \emph{Generalized inverses: theory and
  applications}.\hskip 1em plus 0.5em minus 0.4em\relax Springer Science \&
  Business Media, 2003, vol.~15.

\bibitem{iros15}
A.~M. Bestick, S.~A. Burden, G.~Willits, N.~Naikal, S.~S. Sastry, and
  R.~Bajcsy, ``Personalized kinematics for human-robot collaborative
  manipulation,'' in \emph{Proc. {IEEE/RSJ} Int. Conf. Intelligent Robots and
  Syst.}, 2015, pp. 1037--1044.

\bibitem{math}
R.~M. Murray, Z.~Li, S.~S. Sastry, and S.~S. Sastry, \emph{A mathematical
  introduction to robotic manipulation}.\hskip 1em plus 0.5em minus 0.4em\relax
  CRC press, 1994.

\bibitem{ukf}
E.~A. Wan and R.~V.~D. Merwe, ``The unscented kalman filter for nonlinear
  estimation,'' in \emph{Proc. IEEE Adaptive Syst. for Signal Process.,
  Commun., and Control Symp.}, 2000, pp. 153--158.

\bibitem{James2013}
G.~James, D.~Witten, T.~Hastie, and R.~Tibshirani, \emph{An introduction to
  statistical learning}.\hskip 1em plus 0.5em minus 0.4em\relax New York:
  Springer, 2013, vol. 112.

\bibitem{Oguz2018b}
O.~Oguz, Z.~Zhou, S.~Glasauer, and D.~Wollherr, ``An inverse optimal control
  approach to explain human arm reaching control based on multiple internal
  models,'' \emph{Scientific reports}, vol.~8, no.~1, p. 5583, 2018.

\bibitem{Biegler2009}
L.~T. Biegler and V.~M. Zavala, ``Large-scale nonlinear programming using
  ipopt: An integrating framework for enterprise-wide dynamic optimization,''
  \emph{Comput. \& Chemical Eng.}, vol.~33, no.~3, pp. 575--582, 2009.

\end{thebibliography}

\end{document}